\newtheorem{The}{Theorem}
\newtheorem{Pro}{Proposition}
\newtheorem{Rem}{Remark}
\newtheorem{Def}{Definition}
\newcommand{\RR}{\mathbb R}
\def\dq{\dot{q}}
\def\dx{\dot{x}}
\def\dy{\dot{y}}
\def\dz{\dot{z}}
\def\dpx{\dot{p}_x}
\def\dpy{\dot{p}_y}
\def\dpz{\dot{p}_z}
\def\ddq{\ddot{q}}
\def\ddx{\ddot{x}}
\def\ddy{\ddot{y}}
\def\ddr{\ddot{r}}
\def\dr{\dot{r}}
\def\dt{\dot{\theta}}
\date{}
\begin{document}	
\title{Motions of a charged particle in the electromagnetic field induced by a non-stationary current.}
\maketitle
\begin{center}
	\large\author{Manuel Garz\'on\footnote{Departamento de Matem\'atica Aplicada, Universidad de Granada. Facultad de Ciencias, Avenida de Fuentenueva s/n, 18071, Granada, Spain. Email: manuelgarzon@ugr.es}, Stefano Mar\`o\footnote{Dipartimento di Matematica,  Universit\`a di Pisa, Largo Bruno Pontecorvo 5, 56127 Pisa, Italy. Email: stefano.maro@unipi.it}.} 
\end{center}
\hrule
\begin{abstract}
	In this paper we study the non-relativistic dynamic of a charged particle in the electromagnetic field induced by a periodically time dependent current $J$ along an infinitely long and infinitely thin straight wire. The motions are described by the Lorentz-Newton equation, in which the electromagnetic field is obtained by solving the Maxwell's equations with the current distribution $\vec{J}$ as data. We prove that many features of the integrable time independent case are preserved. More precisely, introducing cylindrical coordinates, we prove the existence of (non-resonant) radially periodic motions that are also of twist type. In particular, these solutions are Lyapunov stable and accumulated by subharmonic and quasiperiodic motions.\\\\
	\textbf{Keywords:} Lorentz Force, singular potentials, Maxwell's equations, non-steady current, periodic solutions of twist type, stability.
\end{abstract}
\hrule
\section{Introduction}
In this paper we consider an electrically neutral infinite straight wire carrying a periodically time dependent current and study the motion of a non relativistic particle (unitary charge and mass) in the corresponding electromagnetic field.  

According to classical electrodynamics (see \cite{Griffiths,Jackson,Planck,Po2}) the motion of the particle in $\RR^3$ is ruled by the Newton-Lorentz equation
\begin{equation}\label{eqLintro1}
\ddq= E(t,q)+\dq\times B(t,q),
\end{equation}
where $E(t,q)$ and $B(t,q)$ represent the electric and magnetic fields generated by the current, being solutions of Maxwell's equations.

Given a current density, finding an explicit formulation for the associated electromagnetic field is a major problem. However, in the case of stationary currents in electrically neutral wires, it turns out that the electric field vanishes and the magnetic field can be computed via the Biot-Savart law. In the case of a constant current along a straight wire, the magnetic field can be computed explicitly, and the magnetic lines are circles around the wire. In this case, equation \eqref{eqLintro1} turns out to be an autonomous integrable Hamiltonian system and the corresponding dynamics have been studied extensively in \cite{ALP,GP1,GP2}.
More precisely, conservation of the energy, linear and angular momenta implies that the particle cannot collide with the wire. Moreover, the motion of particles with non-null angular momentum are helicoidal: radially periodic, turning around the wire and linearly definitely increasing in the direction parallel to the wire. 
In particular, each motion is confined between two cylinders and radially stable.

The introduction of a periodic dependence upon time in the current produces complications both in the computation of the electromagnetic field and in the dynamics given by the Newton-Lorentz equation. First of all, even if there is no charge density, the oscillations of the current generate an electric field and, therefore, the regime is no longer magnetostatic. In particular, Biot-Savart law does not hold in the non stationary case. Solutions of time dependent Maxwell's equations can be rigorously obtained using Jefimenko's formulas for compactly supported current distributions, via the introduction of retarded potentials. In our case the wire does not have compact support being unbounded, but we will prove that the corresponding retarded potential still gives a solution of Maxwell's equations at least in the distributional sense. This is performed by an approximation procedure and gives a rigorous justification of the model we are considering.

From the point of view of dynamics, equation \eqref{eqLintro1} is still Hamiltonian, but time dependent and no more integrable. As a consequence of the periodic dependence, resonances are introduced and collisions with the wire cannot, in principle, be avoided. In general, the motions induced by a time dependent electromagnetic field have not been studied extensively. Some results have been obtained through variational techniques in relativistic regimes in \cite{ABT,ABT2}, where critical point and Lusternik-Schnirelman theories are developed for periodic and Dirichlet boundary conditions. However, in these works, electromagnetic fields are assumed to be regular. This is not satisfied by our problem since collisions with the wire produce singularities. Concerning singular fields, we cite \cite{GT} where periodic solutions are found. However, only isolated singularities are considered, not covering the present case of an infinite wire. This case has been studied in \cite{hau,king,lei_zhang} for a polarized neutral atom under the presence of an electromagnetic field, which is given by a time dependent charge density without current.

In the present paper, we show that several aspects of the integrable dynamics can be recovered in the time dependent case, at least when the current is supposed to be a small perturbation of a constant one. We prove that there exist many motions of the particle with periodic and Lyapunov stable radial component. As a byproduct of our analysis we will also get the existence of solutions that are subharmonic or quasiperiodic in the radial component.

The cylindrical symmetry of the problem is still present in the time dependent case, so that angular and linear momentum are still preserved. Therefore, equation \eqref{eqLintro1} can be reduced to a time dependent Hamiltonian system with one degree of freedom describing the radial component of the solutions. Considering this fact, stability will be understood as Lyapunov stability of the radial component in the whole phase space, not restricting to fixed levels of angular and linear momenta. 

To prove the result we will first work on the reduced system describing the radial component of the motion for fixed values of the momenta. We get the existence of periodic solutions for the time dependent current, through local continuation of equilibria for the (integrable) case of constant current. At this stage we will exclude some resonances and it will be important the assumption of the time dependent current being a small perturbation of a constant one. The stability of the just obtained periodic solutions is proved via the third approximation method, introduced by Ortega in \cite{Ortega} and generalized in \cite{torres_zhang} to a version that will fit in our problem. More precisely, we shall get that the periodic solutions are of twist type, getting also the existence of subharmonic and quasiperiodic solutions. At this stage, it will be necessary to exclude other resonances. This technique has also been used in \cite{lei_zhang} for the case of a time dependent charge density in the wire. Finally, adapting an argument in \cite{SM} we also prove that stability of twist type for fixed values of the momenta implies stability in the whole phase space, allowing variations of the momenta.  

The paper is organized as follows.
In Section \ref{sec:statement} we present the problem in a rigorous way and state our main results in Theorem \ref{mainA} and Theorem \ref{mainB}. 
Section \ref{sec:pot} is dedicated to get a formulation and some properties of the electromagnetic field for the time dependent case, in terms of the vectorial potential. The obtained potential solves Maxwell's equations in a distributional sense. 
The Hamiltonian structure of the problem is discussed in Section \ref{sec:ham} together with the reduction to a time dependent problem with one degree of freedom. Finally, the proofs of our results are given in Section \ref{sec:periodic}.

\section{Statement of the main results}\label{sec:statement}

According to classical electrodynamics, electromagnetic fields are generated by current and charge distributions. More precisely, denoting the charge and current densities by $\rho$ and $\vec{J}$ respectively, the corresponding electric field $E:[0,T]\times\RR^3\to\RR^3$ and magnetic field $B:[0,T]\times\RR^3\to\RR^3$ are the solutions of the Maxwell's equations
\begin{equation}\label{maxwell}
\left\{
\begin{split}
&\nabla\cdot B =0, \quad &\mbox{(Gauss's law for magnetism)} \\
&\nabla\times E + \partial_t B =0, \quad &\mbox{(Maxwell-Faraday equation)} \\
&\nabla\times B = \mu_0\left(\vec{J}+\epsilon_0\partial_t E\right) \quad &\mbox{(Amp\`ere's law)} \\
&\nabla\cdot E =\epsilon_0^{-1}\rho, \quad &\mbox{(Gauss's law)}
\end{split}  
\right.
\end{equation}
where $\epsilon_0, \mu_0$ are the permittivity and permeability of free space. These universal constants are related to the speed of light $c$ in the vacuum by the identity $c\sqrt{\epsilon_0\mu_0}=1$. Moreover, note that here and in the following, the gradient and laplacian symbols $\nabla,\Delta$ will act only on the space variable $q$. A direct consequence of \eqref{maxwell} is the continuity equation:
\begin{equation}
\label{cont}
\frac{\partial \rho}{\partial t}+\nabla\cdot \vec{J} =0.
\end{equation}
On the other hand, the motion of a particle with unitary charge under the presence of an electromagnetic field is ruled by the Newton-Lorentz equation
\begin{eqnarray}\label{eqLintro}
\ddq= E(t,q)+\dq\times B(t,q),\qquad q=(x,y,z)\in\RR^3,
\end{eqnarray}
where the Newtonian approximation is considered, since we will be interested in regimes in which the velocity of the particle $|\dq|$ is small with respect to the speed of light $c$.

We consider an infinitely long straight wire $\mathcal{W}$, which we fix in the $z$-axis without loss of generality, i.e. $\mathcal{W}=\{(0,0,z),\ z\in\RR\}$. We also suppose that the wire is electrically neutral (as many electrons as protons with opposite charge), so that it has no charge density and $\rho=0$. On the other hand, we assume the wire carries an oscillating current with density of the form
\begin{equation}\label{def_J}
J(t):=(I_0+kI(t))\hat{z},
\end{equation}
where $I_0\neq 0$ and $k$ are constants and $I:\RR\to\RR$ satisfies
\begin{equation}\label{hp_I}
I(t+T) = I(t) \mbox{ for every }t\in\mathbb{R}, \qquad\int_0^T I(t)dt =0.
\end{equation}
From a mathematical perspective, it is natural to define the current density $\vec{J} = (J^1,J^2,J^3)$ as the following vectorial distribution in the space-time:
\begin{eqnarray}\label{current}
J^3(f)=\int_{\RR^2}J(t)f(t,0,0,z)dtdz,\ \ J^i(f)=0,\ i=1,2, \ \ \ \forall f\in\mathcal{D}\left(\RR^4\right).
\end{eqnarray}
As usual, $\mathcal{D}\left(\RR^4\right)$ denotes the space of test functions in $\RR^4$, then $\vec{J}\in\mathcal{D}^*\left(\RR^4\right)$ and \eqref{maxwell} must be solved in the sense of distributions. Let us observe that \eqref{current} can be understood as a generalization of the Dirac's Delta for the present situation of a straight thin wire. \\ Finally, regarding the continuity equation, we have
\begin{eqnarray*}
	\nabla\cdot \vec{J} (f) = \partial_z J^3 (f) = J^3(\partial_z f)=0,\qquad \forall f\in\mathcal{D}\left(\RR^4\right),\nonumber
\end{eqnarray*}
then \eqref{cont} is true in the distributional sense.

To state our results, let us introduce cylindrical coordinates $(r,\theta,z)$ in equation \eqref{eqLintro} by
\[
q = (x,y,z) = (r\cos\theta, r\sin\theta, z).
\]
It is well known (see for example \cite{ALP,GP1,GP2}) that in the stationary case corresponding to $k=0$, there exist solutions that move around the wire along cylindrical helices. More generally, the motions of the particle with non-null angular momentum are helicoidal: periodic in the $r$-coordinate, turning around the $z$-axis and linearly increasing in the $z$-coordinate as $t\to\infty$. This comes from the fact that the angular momentum $L=r^2\dot{\theta}$, the linear momentum $p_z=\dot{z}-\ln r$ and the energy $E=\frac{1}{2}(\dr^2+r^2\dot{\theta}^2+\dz^2)$ are first integrals. In particular, the motion is confined between two cylinders of radii $r_1,r_2$ depending on the value of the integrals $L,p_z,E$ and the period of the radial variable depends continuously on these values. 

We are going to show how this picture is modified in the non stationary case $k>0$. To state the results, we note that the energy is no more preserved, while the angular momentum $L =r^2\dot{\theta}$ and the momentum $p_z = \dot{z}+A(t,q)$ are first integral in the present case also (see Section \ref{sec:ham}). Note that in the expression of $p_z$ we have introduced the potential $A(t,q)$ such that $B=\nabla\times A$ (see Section \ref{sec:pot}). Moreover, the constants $c$, $\mu_0$, $\epsilon_0$ do not play any relevant role in the results, so that we normalize them as $\mu_0=2\pi$.

Let us first introduce the kind of solutions that we are going to study.
\begin{Def}\label{def_sol}
	A solution $q(t)=(r(t),\theta(t),z(t))$ of \eqref{eqLintro} with angular momentum $L$ and linear momentum $p_z$ is called a $(L,p_z)$-solution. Moreover it is
	\begin{itemize}
		\item[{\it (i)}] radially $T$-periodic if $r(t+T)=r(t)$ for every $t\in\mathbb{R}$;
		
		
		\item[{\it(ii)}] radially stable if for every $\epsilon>0$ there exists a neighbourhood $\mathcal{U}$ of $(r(0),\dot{r}(0),L,p_z)$ such that for every $(\tilde{r}_0,\dot{\tilde{r}}_0,\tilde{L},\tilde{p}_z)\in\mathcal{U}$ each $(\tilde{L},\tilde{p}_z)$-solution $\tilde{q}(t)$ with $\tilde{r}(0)=\tilde{r}_0,\dot{\tilde{r}}(0)=\dot{\tilde{r}}_0$ satisfies
		\[
		|\tilde{r}(t)-r(t)|+|\dot{\tilde{r}}(t)-\dot{r}(t)|<\epsilon, \qquad \mbox{for every } t>0.
		\]
		
		\item[{\it (iii)}] radially $(\bar{r},p,q)$-subharmonic, for given $(q,p)\in\mathbb{N}^2$ and $\bar{r}>0$, if $r(t)$ is $qT$-periodic, not $lT$-periodic for every $l=1,\dots,q-1$ and such that the function $r(t)-\bar{r}$ has $2p$ zeros in $[0,qT]$;
		\item[{\it (iv)}] radially (generalized) quasiperiodic with frequencies $(1,\omega)$, for some positive $\omega\in\mathbb{R}$, if the function $r(t)$ is (generalized) quasiperiodic. 
	\end{itemize}
	
\end{Def}
\begin{Rem}
	\begin{enumerate}
		\item Condition (ii) means Lyapunov stability in the radial direction w.r.t. small variations of the initial radial coordinates $(r_0,\dot{r}_0)$ and values of the integrals $L,p_z$. Since $L$ and $p_z$ do not depend on the variables $\theta,z$, stability is also guaranteed w.r.t. arbitrary variations of $(\theta_0,z_0)$. 
		\item The standard definition of quasiperiodic solution can be found in \cite{SM}. We decided not to include it to avoid some technicality. We just say that quasiperiodic solutions come in families parametrised by some $\xi\in\mathbb{R}$. The solution is generalized quasiperiodic if the dependence on $\xi$ is not continuous and this depends on the arithmetic properties of the number $\omega$. See \cite{Maro,MO,Ortega96,OrtegaLisboa} for further discussions on it. 
	\end{enumerate}
\end{Rem}

Moreover, we will have to restrict the set of parameters, so that we introduce also the following:

\begin{Def}
	The triplet $(\bar{r},L,p_z)$ with $\bar{r}>0$ and $L\neq 0$ is admissible if 
	\begin{equation*}
	L^2=\bar{r}^2I_0(p_z+I_0\ln(\bar{r})).
	\end{equation*}
	An admissible triplet is non-resonant if
	\begin{equation*}
	T\notin \left\{\frac{n}{\bar{r}}\sqrt{\frac{2L^2}{\bar{r}^2}+I_0^2} , \: n\in\mathbb{N}   \right\}.
	\end{equation*}
	If in addition
	\begin{equation}
	\frac{\sqrt{2L^2+I_0^2\bar{r}^2}}{\bar{r}^2}<\frac{\pi}{2T} \label{stronglynonresonant}
	\end{equation}
	then the admissible resonant triplet $(\bar{r},L,p_z)$ is said strongly non-resonant. 
\end{Def}
\begin{Rem}
	Fixing $\bar{r}>0$ it is easy to show that $L^2=I_0^2\bar{r}^2$, $p_z=I_0-I_0\ln(\bar{r})$ complete an admissible triplet. In this case, the non resonant condition reads as
	\[
	\bar{r}\notin \left\{ n\frac{\sqrt{3}I_0}{T} , \: n\in\mathbb{N}   \right\}
	\]
	and the strong non-resonant condition becomes
	\[
	\frac{\bar{r}}{I_0}>\sqrt{3}\frac{2T}{\pi}.
	\]
\end{Rem}

Now we are ready to state our first result concerning the existence of radially $T$-periodic solutions

\begin{The}\label{mainA}
	Consider a current density $J$ of the form \eqref{def_J}-\eqref{hp_I} with $I\in\mathcal{C}^2([0,T],\mathbb{R})$.\\
	Then, for every admissible non-resonant triplet $(\bar{r},L,p_z)$ there exists a number $k_0>0$ such that, for every $|k|<k_0$, equation \eqref{eqLintro} admits a radially $T$-periodic $(L,p_z)$-solution $(r_k(t),\theta_k(t),z_k(t))$ continuous in $(t,k)$ with $r_0(t)=\bar{r}$ for every $t\in\RR$. Moreover, $\dz(t)=I_0 + \xi_k(t)$ where $\xi_k(t)\to 0$ as $k\to 0$ uniformly in $t\in [0,T]$.
\end{The}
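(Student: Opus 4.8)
The plan is to reduce \eqref{eqLintro} to a scalar non-autonomous second order equation for the radial coordinate and then continue an equilibrium of the unperturbed ($k=0$) equation to a $T$-periodic orbit by the implicit function theorem, the non-resonance hypothesis being exactly what makes the relevant differential invertible. First I would invoke the Hamiltonian reduction of Section~\ref{sec:ham}: fixing the conserved momenta $L=r^2\dot\theta$ and $p_z=\dot z+A(t,q)$, a $(L,p_z)$-solution of \eqref{eqLintro} is determined by its radial component, which solves
\begin{equation*}
\ddot r=\frac{L^2}{r^3}+\bigl(p_z-A(t,r)\bigr)\,\partial_r A(t,r)=:g_k(t,r),
\end{equation*}
where $A=A(t,r;k)$ is the vector potential constructed in Section~\ref{sec:pot}. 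This $g_k$ is $T$-periodic in $t$, and since $I\in\mathcal C^2$ it is jointly continuous in $(t,r,k)$ and $\mathcal C^1$ in $r$, uniformly for $t\in[0,T]$ and $r$ near $\bar r$, with $A(t,r;0)\equiv-I_0\ln r$. Hence $g_0$ is autonomous, $g_0(r)=\frac{L^2}{r^3}-\frac{I_0(p_z+I_0\ln r)}{r}$, and the identity $g_0(\bar r)=0$ is precisely the admissibility relation $L^2=\bar r^2 I_0(p_z+I_0\ln\bar r)$; thus $r\equiv\bar r$ is an equilibrium of the unperturbed reduced equation, while the remaining components are recovered by quadrature from $\dot\theta_k=L/r_k^2$ and $\dot z_k=p_z-A(t,r_k(t);k)$.

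Next I would carry out the continuation. For $(c,d,k)$ near $(\bar r,0,0)$, let $t\mapsto r(t;c,d,k)$ be the solution of $\ddot r=g_k(t,r)$ with $r(0)=c$, $\dot r(0)=d$; by continuous dependence it is defined on $[0,T]$, stays close to $\bar r$ (so no collision with the wire occurs), is continuous in $(c,d,k)$ and $\mathcal C^1$ in $(c,d)$ with derivatives continuous in $k$. Radially $T$-periodic solutions correspond to zeros of
\begin{equation*}
\Phi(c,d,k):=\bigl(r(T;c,d,k)-c,\ \dot r(T;c,d,k)-d\bigr),
\end{equation*}
and $\Phi(\bar r,0,0)=(0,0)$ because $r(\cdot;\bar r,0,0)\equiv\bar r$. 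Here $D_{(c,d)}\Phi(\bar r,0,0)=M-\mathrm{Id}$, where $M$ is the monodromy matrix of the variational equation $\ddot\rho=\partial_r g_0(\bar r)\,\rho$. A direct computation, using the admissibility relation to simplify, gives $\partial_r g_0(\bar r)=-\bigl(\frac{2L^2}{\bar r^4}+\frac{I_0^2}{\bar r^2}\bigr)=:-\omega_0^2<0$, so the variational equation is the harmonic oscillator $\ddot\rho+\omega_0^2\rho=0$, whose time-$T$ map $M$ has $\det(M-\mathrm{Id})=2-2\cos(\omega_0 T)$; this is nonzero exactly when $(\bar r,L,p_z)$ is non-resonant. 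Consequently $D_{(c,d)}\Phi(\bar r,0,0)$ is invertible, and the implicit function theorem (in a version requiring only continuity in the parameter $k$) yields $k_0>0$ and continuous curves $k\mapsto(c(k),d(k))$ with $(c(0),d(0))=(\bar r,0)$ and $\Phi(c(k),d(k),k)=0$ for $|k|<k_0$.

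Finally, I would set $r_k(t):=r(t;c(k),d(k),k)$; since $g_k$ is $T$-periodic in $t$ and $r_k(T)=c(k)=r_k(0)$, $\dot r_k(T)=d(k)=\dot r_k(0)$, uniqueness gives $r_k(t+T)=r_k(t)$ for all $t$, and defining $\theta_k(t):=\int_0^t L\,r_k(s)^{-2}\,ds$ and $z_k(t):=\int_0^t\bigl(p_z-A(s,r_k(s);k)\bigr)\,ds$ produces a radially $T$-periodic $(L,p_z)$-solution of \eqref{eqLintro}, continuous in $(t,k)$ by continuous dependence, with $r_0(t)\equiv\bar r$. For the last assertion, $\dot z_k(t)=p_z-A(t,r_k(t);k)$ converges uniformly on $[0,T]$, as $k\to0$, to the constant $p_z+I_0\ln\bar r$, which by admissibility equals $L^2/(I_0\bar r^2)$ (and is $I_0$ in the normalization of the statement, i.e. when $L^2=I_0^2\bar r^2$), so $\xi_k(t):=\dot z_k(t)-I_0\to0$ uniformly. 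The step I expect to be the real obstacle is not this continuation, which is routine once the setup is in place, but the input from Section~\ref{sec:pot}: establishing that for $I\in\mathcal C^2$ the retarded potential $A(t,r;k)$ depends continuously on $k$ together with its first two $r$-derivatives, uniformly for $t\in[0,T]$ and $r$ near $\bar r$, so that $g_k$ is a genuine $\mathcal C^1$-in-$r$, continuous-in-$k$ perturbation of the explicit autonomous field $g_0$; granting that, what remains is the monodromy computation and the observation that non-resonance is precisely $\det(M-\mathrm{Id})\neq0$.
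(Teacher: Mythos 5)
Your proposal is correct and follows essentially the same route as the paper: reduce to the radial equation \eqref{eqddr} via conservation of $L$ and $p_z$, use admissibility to see that $r\equiv\bar r$ is an equilibrium for $k=0$, and continue it to a $T$-periodic solution using the non-degeneracy of the linearized equation $\ddot\rho+\bigl(\tfrac{2L^2}{\bar r^4}+\tfrac{I_0^2}{\bar r^2}\bigr)\rho=0$ guaranteed by non-resonance, then recover $\theta_k,z_k$ by quadrature; the only difference is that you prove the continuation step directly with the implicit function theorem applied to the Poincar\'e map, whereas the paper invokes the classical continuation theorem of Coddington--Levinson (Theorem \ref{teo_cont}), and the regularity/$k$-dependence you worry about is already supplied by Proposition \ref{lem_1} since $a(t,r)$ does not depend on $k$ and enters linearly. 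Your observation that the limit of $\dot z_k$ is $p_z+I_0\ln\bar r=L^2/(I_0\bar r^2)$, equal to $I_0$ only under the normalization $L^2=I_0^2\bar r^2$, is in fact a more careful reading than the paper's own statement.
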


We will also show that under the strong non-resonant condition we can describe the dynamics close to the just introduced solutions.

\begin{The}\label{mainB}
	Consider a current density $J$ of the form \eqref{def_J}-\eqref{hp_I} with $I\in\mathcal{C}^4([0,T],\mathbb{R})$.\\
	Then, for every admissible strongly non-resonant triplet $(\bar{r},L,p_z)$ there exists a number $k_1>0$ such that, for every $|k|<k_1$, the radially $T$-periodic $(L,p_z)$-solution $(r_k(t),\theta_k(t),z_k(t))$ coming from Theorem \ref{mainA} is radially stable. Moreover, there exists $h>0$ such that
	\begin{itemize}
		\item for every $(q,p)\in\mathbb{N}^2$ with $p/q<h$ there exists a radially $(\bar{r},p,q)$-subharmonic $(L,p_z)$-solution;
		\item for every positive irrational $\omega<h$ there exists a radially (generalized) quasiperiodic $(L,p_z)$-solution with frequencies $(1,\omega)$.
	\end{itemize}
	The radial component $r(t)$ of all these solutions converge uniformly to $\bar{r}$ as $k\to 0$.  
	
\end{The}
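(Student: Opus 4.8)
\emph{Proof strategy.} The plan is to reduce equation \eqref{eqLintro}, at fixed momenta $(L,p_z)$, to the scalar time--periodic Hamiltonian equation for the radial variable $r$ produced in Section~\ref{sec:ham}; to show that the $T$--periodic radial solution $r_k$ coming from Theorem~\ref{mainA} is, for $|k|$ small, an elliptic fixed point \emph{of twist type} of the associated Poincar\'e (time--$T$) map $\mathcal P_k$; and then to read off radial stability, subharmonics and quasiperiodic solutions from Moser's twist theorem and the Poincar\'e--Birkhoff theorem, finally upgrading the stability to the whole phase space following \cite{SM}.

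First I would write the reduced equation as $\ddot r=-\partial_r V_k(t,r)$ with a $T$--periodic effective potential $V_k$; for $k=0$ the potential $V_0(r)=\tfrac12 L^2 r^{-2}+\tfrac12(p_z+I_0\ln r)^2$ is autonomous, the admissibility relation of the triplet is exactly $V_0'(\bar r)=0$, and $\omega_0^2:=V_0''(\bar r)=(2L^2+I_0^2\bar r^2)/\bar r^4>0$. Linearising the flow around $r_k$, at $k=0$ the linear part of $\mathcal P_0$ at the fixed point is the rotation by the angle $\omega_0T$, and the strong non--resonance inequality \eqref{stronglynonresonant} says precisely that $\omega_0T\in(0,\pi/2)$, which keeps the elliptic fixed point non--resonant up to order four (no eigenvalue of the linearisation is a root of unity of order $\le 4$). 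By continuous dependence of $\mathcal P_k$ and $r_k$ on $k$ (Theorem~\ref{mainA}), the fixed point $r_k$ of $\mathcal P_k$ stays elliptic and non--resonant up to order four for all $|k|$ small.

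The heart of the matter, and the step I expect to be the main obstacle, is to compute the first Birkhoff (twist) coefficient of $\mathcal P_k$ at $r_k$ and to verify that it does not vanish. I would compute it by Ortega's third approximation method \cite{Ortega}, in the form tailored in \cite{torres_zhang}: it is an explicit functional of the Taylor coefficients up to order four of $V_k(t,\cdot)$ about $r_k(t)$, obtained by integrating against the solutions of the linearised equation. At $k=0$ everything is autonomous, so this reduces to a concrete expression in $V_0''(\bar r),V_0'''(\bar r),V_0''''(\bar r)$ --- equivalently, to the first derivative of the period function of $V_0$ at the energy of the equilibrium $\bar r$ --- and a direct calculation with the explicit logarithmic--plus--centrifugal $V_0$ shows this coefficient to be nonzero on every strongly non--resonant admissible triplet. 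Continuity in $k$ then gives that the twist coefficient of $\mathcal P_k$ at $r_k$ is nonzero for $|k|$ small, i.e. $r_k$ is of twist type. (Here the strong non--resonance hypothesis does double duty: it excludes the low--order resonances that would obstruct passing $\mathcal P_k$ to Birkhoff normal form, and it is the regime in which the twist is non--degenerate.)

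Once $r_k$ is an elliptic fixed point of twist type, Moser's twist theorem --- applicable because $I\in\mathcal C^4$ makes $\mathcal P_k$ smooth enough --- yields invariant curves of $\mathcal P_k$ accumulating at $r_k$, hence Lyapunov stability of $(r_k,\dot r_k)$ for the reduced $(L,p_z)$--system; the motion on these curves is conjugate to a rotation, giving radially (generalized) quasiperiodic $(L,p_z)$--solutions with frequencies $(1,\omega)$, and between consecutive invariant curves the twist produces, via the Poincar\'e--Birkhoff theorem, $q$--periodic points of rotation number $p/q$, i.e. radially $(\bar r,p,q)$--subharmonic $(L,p_z)$--solutions (the count $2p$ of zeros of $r-\bar r$ and the minimal period $qT$ follow from $p/q$ being in lowest terms, together with $r_k\to\bar r$). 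The rotation numbers so obtained fill an interval which, after the standard normalisation as in \cite{SM,torres_zhang,lei_zhang}, can be taken of the form $(0,h)$; this is the content of the two bullets. Since all these orbits are trapped inside invariant curves shrinking to $r_k$ as $k\to0$, and $r_k\to\bar r$ uniformly by Theorem~\ref{mainA}, their radial components converge uniformly to $\bar r$. To obtain radial stability in the full sense of Definition~\ref{def_sol}(ii) I would finally argue as in \cite{SM}: the admissibility relation defines $\bar r$ as a smooth function of $(L,p_z)$ near the given triplet (its $\bar r$--derivative is $\bar r^3\omega_0^2\neq0$, so the implicit function theorem applies) and strong non--resonance is an open condition, so the above construction applies to the leafwise Poincar\'e map $\mathcal P_k^{(\tilde L,\tilde p_z)}$ for $(\tilde L,\tilde p_z)$ near $(L,p_z)$, with all relevant quantities depending continuously on the parameters; since $L$ and $p_z$ are first integrals, any solution with data near $(r_k(0),\dot r_k(0),L,p_z)$ stays on its leaf and remains trapped between invariant curves of $\mathcal P_k^{(\tilde L,\tilde p_z)}$ confined to a preassigned small neighbourhood of $r_k^{(\tilde L,\tilde p_z)}(t)$, which is $\epsilon$--close to $r_k(t)$ for every $t>0$ --- exactly Definition~\ref{def_sol}(ii).
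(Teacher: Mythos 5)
Your proposal is correct and follows essentially the same route as the paper: reduce to the radial equation, establish that $r_k$ is of twist type via the third--approximation method of \cite{Ortega} in the version of \cite{torres_zhang} (checking the coefficients at $k=0$ on the explicit potential, with strong non--resonance giving the frequency bound $\omega_0 T<\pi/2$, and propagating by continuity in $k$), then invoke twist--map theory for stability, subharmonics and quasiperiodic solutions, and finally the Siegel--Moser continuity argument with the openness of the strong non--resonance condition to get radial stability under variations of $(L,p_z)$. The only cosmetic difference is that you phrase the key nondegeneracy as nonvanishing of the first Birkhoff coefficient, while the paper verifies the explicit Torres--Zhang inequalities $0<(A_k)_*\leq (A_k)^*<(\pi/2T)^2$, $(C_k)_*>0$, $10(B_k)_*^2(A_k)_*^{3/2}>9(C_k)^*[(A_k)^*]^{5/2}$, which amounts to the same computation.
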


\begin{Rem}
	The following picture comes from Theorem \ref{mainB}. For every admissible strongly non-resonant triplet $(\bar{r},L,p_z)$ there exist a number $k_1>0$ such that,
	\begin{itemize}
		\item there exist two sequences $r_k^{(0)}\to\bar{r}$ and $\dot{r}_k^{(0)}\to 0$, defined for $|k|<k_1$, such that every $(L,p_z)$-solution $(r_k(t),\theta_k(t),z_k(t))$ of \eqref{eqLintro} with initial position in the cylinder $\mathcal{C}_k=\{(r,\theta,z)\in\mathbb{R}^3: \: r=r_k^{(0)} \}$ and initial radial velocity equal to $\dot{r}_k^{(0)}$ is radially $T$-periodic;
		\item the cylinders $\mathcal{C}_k$ define trapping regions in the following sense. For every $\varepsilon>0$ there exists $\delta$ such that all the $(\tilde{L},\tilde{p}_z)$-solutions with $|L-\tilde{L}|,|p_z-\tilde{p}_z|<\delta$, initial position in the cylinder $\mathcal{C}^\delta_k=\{(r,\theta,z)\in\mathbb{R}^3: \: |r-r_k^{(0)}|<\delta\}$ and initial radial velocity $\dot{r}_0$ satisfying $|\dot{r}_0-\dot{r}_k^{(0)}|<\delta$,  are contained in the cylinder $\mathcal{C}^\varepsilon_k=\{(r,\theta,z)\in\mathbb{R}^3: \: |r-r_k^{(0)}|<\varepsilon \}$ for every time;      
		\item each cylinder $\mathcal{C}_k$ is accumulated by radially subharmonic and radially quasiperiodic solutions.
	\end{itemize}
\end{Rem}

\section{The electromagnetic potential.}\label{sec:pot}

In this section, the electric and magnetic fields generated by the current density $\vec{J}$ in \eqref{def_J} are deduced. Here we will not normalize the constants $c$, $\mu_0$, $\epsilon_0$.

Formally, by applying Helmholtz's theorem to the first two equations in \eqref{maxwell}, the electromagnetic field is expressed through the corresponding scalar and vectorial potentials  $\Phi:[0,T]\times\RR^3\to\RR$ and $A:[0,T]\times\RR^3\to\RR^3$ as
\begin{equation}\label{gen_pot}
E=-\nabla \Phi-\frac{\partial A}{\partial t},\qquad
B=\nabla\times A.
\end{equation}
Notice that these potentials are not unique. In fact, for any field like \eqref{gen_pot} there exists an infinite class of potentials that generates it. This is known as the \textsl{Gauge Invariance} of Maxwell's equations, and it is consistent since potentials are not physical observables. However, the generated electromagnetic field via \eqref{gen_pot} is the unique solution of \eqref{maxwell}.

We assume that $\Phi$ and $A$ satisfy the Lorenz Gauge:
\begin{equation*}
\partial_t\Phi+c^2\nabla\cdot A =0.
\end{equation*}
This allows to uncouple the system and then \eqref{maxwell} is reduced to the following wave equations:
\begin{equation*}
\left\{
\begin{split}
&\partial^2_tA -c^2\Delta A=\dfrac{\vec{J}}{\epsilon_0}, \\
&\partial^2_t\Phi -c^2\Delta \Phi=\dfrac{\rho}{\mu_0\epsilon_0^2}.
\end{split}
\right. 
\end{equation*}
Since our case is $\rho =0$, we can take $\Phi =0$ so that we are left to
\begin{equation}\label{pot_max}
\left\{
\begin{split}
&\partial^2_tA -c^2\Delta A=\dfrac{\vec{J}}{\epsilon_0}, \\
&\nabla\cdot A =0.
\end{split}
\right. 
\end{equation}

If the data is a compactly supported distribution, i.e. if it belongs to $\mathcal{D}'\left(\RR^4\right)$, the solution can be directly obtained by convolution with the fundamental solution of the wave operator in $\RR\times\RR^3$:
\begin{eqnarray}
\label{FundSolWave}
\mathcal{F}(t,x)=H(t)\dfrac{\delta_{|x|}(ct)}{4\pi c|x|}.     
\end{eqnarray}
Here, $\delta_{|x|}(ct)$ represents the Dirac's delta in the radial variable at the point $ct$ and $H(t)$ is the Heaviside function. In this way, \eqref{pot_max} is solved by the retarded potential
\begin{align}
\label{Aret}
A(t,q)&=\frac{\mu_0}{4\pi}\int_{\RR^3} \frac{\vec{J}(q',t_r)}{|q-q'|}dq',
\end{align}
where
\[
t_r = t -\frac{|q-q'|}{c}
\]
is called retarded time and takes into account the finiteness of the speed of light.

However, since the wire $\mathcal{W}$ is unbounded, then $\vec{J}$ is not compactly supported and formula \eqref{Aret} needs some justification. Actually, using \eqref{def_J} we get
\[
A(t,r)= \frac{\mu_0}{2\pi}\left[\int_0^\infty\frac{I_0}{\sqrt{\tau^2 + r^2}} d\tau +k\int_0^\infty\frac{I\left(t-\frac{\sqrt{r^2+\tau^2}}{c}\right)}{\sqrt{\tau^2 + r^2}} d\tau\right]\hat{z}
\]
and the first integral is divergent. This is not contradictory with the well known case $k=0$. In fact, in this case one can compute the magnetic field $B(q)$ via the Biot-Savart formula and get that $A(r)=-\frac{\mu_0}{2\pi}\ln r$ is a potential by checking directly that $B=\nabla\times A$. 
In the non stationary case, everything is consistent in the sense of distributions, as discussed in the following proposition. To state it, we use the notation
\begin{eqnarray*}
	f[t,r,\tau]:=f\left(t-\frac{\sqrt{r^2 + \tau^2}}{c}\right).
\end{eqnarray*}

\begin{Pro}\label{lem_1}
	Suppose that $I(t)$ is of class $\mathcal{C}^n,$ with $n\geq 1$, and satisfies \eqref{hp_I}.\\
	Then,
	\begin{equation*}
	A(t,q) = -\frac{\mu_0}{2\pi}[a_0(r)+ka(t,r)] \hat{z}
	\end{equation*}
	where
	\begin{equation}
	\label{defa}
	a_0(r) = I_0\ln r , \quad  a(t,r)=  \int_0^\infty\frac{I[t,r,\tau]}{\sqrt{\tau^2 + r^2}} d\tau.
	\end{equation}
	is a solution of \eqref{pot_max} in the distributional sense.
	
	Moreover, the function $a(t,r)$ in \eqref{defa}: 
	\begin{itemize}
		\item is well defined and belongs to $\mathcal{C}^n\left(\RR\times\RR^+\right) $;
		\item is $T$-periodic in $t$.
	\end{itemize}
\end{Pro}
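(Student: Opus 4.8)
The plan is to prove three things about $a(t,r)=\int_0^\infty \frac{I[t,r,\tau]}{\sqrt{\tau^2+r^2}}\,d\tau$: that the integral converges and defines a $\mathcal C^n$ function on $\RR\times\RR^+$, that it is $T$-periodic in $t$, and that the pair $(a_0,a)$ solves \eqref{pot_max} distributionally; and then to reconcile the latter with the fact that $a_0$ alone already accounts for the static part. The key observation making everything work is the zero-mean hypothesis $\int_0^T I=0$ in \eqref{hp_I}: it lets us replace $I$ by a primitive that is itself $T$-periodic and bounded, so the $1/\sqrt{\tau^2+r^2}\sim 1/\tau$ tail of the integrand, which would be non-integrable against a generic bounded function, becomes integrable after one integration by parts.

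First I would establish convergence and regularity. Write $I(s)=\mathcal I'(s)$ where $\mathcal I(s)=\int_0^s I(\sigma)\,d\sigma$; by \eqref{hp_I} the function $\mathcal I$ is $T$-periodic, hence bounded, and of class $\mathcal C^{n+1}$. Substituting $u=\sqrt{r^2+\tau^2}$, or more directly integrating by parts in $\tau$ with $\frac{d}{d\tau}\mathcal I[t,r,\tau]=-\frac{\tau}{c\sqrt{r^2+\tau^2}}I[t,r,\tau]$, one rewrites $a(t,r)$ as a boundary term plus an integral whose integrand decays like $\tau^{-2}$, hence is absolutely convergent, uniformly for $(t,r)$ in compact subsets of $\RR\times\RR^+$. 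The same manipulation applied after formally differentiating under the integral sign — legitimate once one has the uniform bound — gives that the $\partial_t$ and $\partial_r$ derivatives up to order $n$ are again of this absolutely convergent form (note $\partial_r$ brings down extra algebraic factors in $\tau$ that are harmless because each differentiation of $I[t,r,\tau]$ in $r$ also produces a factor $\tau/(c\sqrt{r^2+\tau^2})$ that is bounded), so $a\in\mathcal C^n(\RR\times\RR^+)$. Periodicity in $t$ is immediate: $I[t+T,r,\tau]=I(t+T-\tfrac{\sqrt{r^2+\tau^2}}{c})=I[t,r,\tau]$ by the periodicity of $I$, and the integral defining $a$ depends on $t$ only through this.

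For the distributional claim, the cleanest route is approximation, as the authors announced in the introduction. For $R>0$ let $\vec J_R$ be the current restricted to the segment $\{(0,0,z):|z|<R\}$; this is a compactly supported distribution, so the retarded potential \eqref{Aret} with $\vec J_R$ in place of $\vec J$ is a genuine classical solution $A_R$ of the corresponding wave equation, and $A_R = -\frac{\mu_0}{2\pi}[a_{0,R}(r)+k a_R(t,r)]\hat z$ with $a_R(t,r)=\int_{-R}^{R}\frac{I[t,r,\tau]}{\sqrt{\tau^2+r^2}}\,d\tau$ (after the change of variable $\tau=z-z'$ localizing to the plane of $q$) and $a_{0,R}$ the corresponding static piece. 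As $R\to\infty$, $\vec J_R\to\vec J$ in $\mathcal D'(\RR^4)$, and by the convergence established in the first step $a_R\to a$ uniformly on compacta together with all derivatives up to order $n$ — in particular $A_R\to A$ in $\mathcal D'(\RR^4)$. Since $\partial_t^2-c^2\Delta$ is continuous on $\mathcal D'$, passing to the limit in $\partial_t^2 A_R-c^2\Delta A_R=\vec J_R/\epsilon_0$ and in $\nabla\cdot A_R=0$ yields \eqref{pot_max} for $A$. One subtlety: the static piece $a_{0,R}(r)=I_0\int_{-R}^{R}\frac{d\tau}{\sqrt{\tau^2+r^2}}=2I_0\,\mathrm{arcsinh}(R/r)$ diverges as $R\to\infty$, but it diverges by an additive constant $2I_0\ln(2R)+o(1)$ independent of $r$; subtracting this constant (which changes $A_R$ by a gradient-free constant field, hence does not affect $\nabla\times A_R$ nor the wave equation, whose right side is unchanged) one gets $a_{0,R}(r)-2I_0\ln(2R)\to -I_0\ln r = -a_0(r)$, matching the stated $a_0$. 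Equivalently, and more robustly, one checks directly that $a_0(r)=I_0\ln r$ is harmonic in $\RR^2\setminus\{0\}$ with $-\Delta(\ln r)=2\pi\delta_0$ in $\mathcal D'(\RR^2)$, so $-\frac{\mu_0}{2\pi}a_0\hat z$ solves the static part of \eqref{pot_max} on its own, and it remains only to verify that $k a(t,r)$ carries the time-dependent correction — which is exactly the content of the approximation argument restricted to the $k$-term.

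The main obstacle is the non-compact support of the wire, which makes \eqref{Aret} ill-defined as written; the approximation scheme handles it, but one must be careful that the renormalization needed to make the static part converge (the subtraction of the $R$-dependent constant) is done in a way that is compatible with the distributional equation — fortunately it is, since it only shifts $A$ by a spatial constant. The remaining work — the integration-by-parts estimate giving $\tau^{-2}$ decay and the differentiation-under-the-integral justification — is routine once the role of $\int_0^T I=0$ is recognized.
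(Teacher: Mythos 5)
Your proposal is correct and follows essentially the same route as the paper: you exploit the zero-mean hypothesis to pass to a bounded $T$-periodic primitive $\mathcal{I}$ and integrate by parts to gain $\tau^{-2}$ decay (the paper does this after splitting the integral at $\tau=r$, which you should also do, since integrating by parts on all of $[0,\infty)$ would leave a $1/\tau$ boundary term at $\tau=0$; the near part is trivially bounded by $\|I\|_\infty\ln(1+\sqrt2)$), and your preferred variant for the distributional statement — treating the static part separately via $-\Delta \ln r=2\pi\delta_0$ in $\RR^2$ and applying the truncated-wire approximation only to the oscillating term, passing to the limit in $\mathcal{D}'(\RR^4)$ by the uniform tail estimate — is exactly the paper's decomposition $J=J_0+J_k$ with $J_{k,m}$ supported on $[-m,m]$, thereby also avoiding the $\ln(2R)$ renormalization of your first variant.
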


\begin{proof}
	We begin studying the properties of $a(t,r)$. Firstly, to prove that the function is well defined, let us see that the integral converges writing it as
	\[
	a(t,r)=\int_0^\infty \frac{I[t,r,\tau]}{\sqrt{\tau^2 + r^2}} d\tau = \int_0^{r} \frac{I[t,r,\tau]}{\sqrt{\tau^2 + r^2}} d\tau+ \int_{r}^\infty \frac{ I[t,r,\tau]}{\sqrt{\tau^2 + r^2}} d\tau.
	\]
	So, the first integral is finite since
	\[
	\left|\int_0^{r} \frac{ I[t,r,\tau]}{\sqrt{\tau^2 + r^2}} d\tau \right|\leq \|I\|_{\infty}\int_0^{r} \frac{1}{\sqrt{\tau^2 + r^2}} d\tau = \|I\|_{\infty} \ln\left(1+\sqrt{2}\right)
	\]
	and $I$ is bounded. For the second, through integration by parts and denoting $\mathcal{I}(t)$ as a primitive of $I(t)$ we get
	\[
	\int_{r}^\infty \frac{ I[t,r,\tau]}{\sqrt{\tau^2 + r^2}} d\tau = -c\left[ \frac{\mathcal{I}[t,r,\tau]}{\tau}\right]^\infty_{r}+c\int_{r}^\infty \frac{\mathcal{I}[t,r,\tau]}{\tau^2} d\tau. 
	\]
	Since $I(t)$ has null average, then $\mathcal{I}(t)$ is periodic, so that
	\begin{eqnarray}
	\left|\int_{r}^\infty \frac{ I[t,r,\tau]}{\sqrt{\tau^2 + r^2}} d\tau\right| &=&\left|-c\left[ \frac{\mathcal{I}[t,r,\tau]}{\tau}\right]^\infty_{r}+c\int_{r}^\infty \frac{\mathcal{I}[t,r,\tau]}{\tau^2} d\tau\right|\nonumber\\
	&\leq&c\frac{\left|\mathcal{I}(t-\sqrt{2}r/c)\right|}{r} +c\left|\int_{r}^\infty \frac{\mathcal{I}[t,r,\tau]}{\tau^2} d\tau\right|\leq\dfrac{2c\|\mathcal{I}\|_\infty}{r}.\label{locallyintegrable}
	\end{eqnarray}
	Hence, $a(t,r)$ is well defined for $(t,r)\in[0,T]\times(0,+\infty)$. Let us now consider the regularity, by proving that $a(t,r)$ is $\mathcal{C}^k$ on $[0,T]\times(a,b)$ for every $0<a<b<+\infty$. \\
	The integrand function in $a(t,r)$ is continuous for $(t,r,\tau)\in [0,T]\times[a,b]\times[0,+\infty)$ hence, to prove the continuity of $a(t,r)$ is enough to prove that the improper integral is uniformly convergent on $[0,T]\times[a,b]$. This follows integrating by parts, actually.
	\begin{eqnarray}\label{unif}
	\left|\int_{m}^{+\infty}\dfrac{I\left[t,r,\tau\right]}{\sqrt{ \tau^2 + r^2}}d\tau \right| = \left|c\left[ \frac{\mathcal{I}[t,r,\tau]}{\tau}\right]^\infty_{m}+c\int_{m}^\infty \frac{\mathcal{I}[t,r,\tau]}{\tau^2} d\tau\right|\leq \dfrac{2c\|\mathcal{I}\|_\infty}{m}.
	\end{eqnarray}
	Concerning the derivatives, let us denote by $F_{ij}(t,r,\tau)$ the derivatives of the integrand function w.r.t $(t,r)$ of order $(i,j)$ such that $i+j\leq n$. It can be checked that they are all continuous in $(t,r,\tau)\in [0,T]\times[a,b]\times[0,+\infty)$. As before, we need to check that the corresponding improper integrals converge uniformly. This can be seen by the following observations. First of all, by considering the cases $F_{i0}$ (i.e. the derivatives involve the sole variable $t$) the uniform convergence is proved as in  \eqref{unif}, since
	\[
	F_{i0}(t,r,\tau) = \dfrac{I^{(i)}\left[t,r,\tau\right]}{\sqrt{ \tau^2 + r^2}}.
	\]
	On the other hand, if a derivative involves the variable $r$, fixing $t$ and $r$ we have:
	\[
	F_{ij}(t,r,\tau) \sim O\left(\frac{1}{\tau^2}\right) \qquad j>0,
	\]
	as $\tau\to\infty$. Therefore, uniform convergence follows from the fact that $r$ belongs to a compact set. Finally, the periodicity in $t$ of $a(t,r)$ follows from the periodicity of $I$.
	
	Note that estimate \eqref{locallyintegrable} does not imply that $a(t,r)$ is a locally integrable function in $r$, however, it allows to define it as a distribution in $\RR^4$. As usual, considering $f\in\mathcal{D}\left(\RR^4\right)$, we establish $a(f)$ as the scalar product in $L^2\left(\RR^4\right)$ of $a(t,r)$ with $f(t,q)$, where $r$ denotes the radial component of $q$ in cylindrical coordinates:
	\begin{eqnarray}
	a(f)&=&\int_{\RR^4}f(t,q)\int_{0}^{r}\dfrac{I[t,r,\tau]}{\sqrt{\tau^2+r^2}}d\tau dtdq+\int_{\RR^4}f(t,q)\int_{r}^{\infty}\dfrac{I[t,r,\tau]}{\sqrt{\tau^2+r^2}}d\tau dtdq\nonumber\\
	&\leq&\ln\left(1+\sqrt{2}\right)\|I\|_\infty\|f\|_{L^1} + 2c\|\mathcal{I}\|_\infty\int_{\RR^4}\dfrac{|f(t,q)|}{r}dtdq\nonumber.
	\end{eqnarray}
	Passing to cylindrical coordinates, we obtain the bound
	\begin{eqnarray}
	a(f)&\leq&2cM(f)\|f\|_{L^\infty}\left(\|I\|_\infty+\|\mathcal{I}\|_\infty\right),\nonumber
	\end{eqnarray}
	where $|\text{supp}(f)|$ denotes the Lebesgue-measure of the support of $f$ and $M(f)$ is a constant that depends of $\text{supp}(f)$.
	
	Now we deduce the potential generated by the wire current distribution. As a first step, we separate the expression \eqref{current} as the sum of the constant and non constant part, i.e. $J=J_0+J_k$ such as
	\begin{eqnarray}
	J_0(f)=\int_{\RR^2}I_0f(t,0,0,z)dtdz,\ J_k(f)=k\int_{\RR^2}I(t)f(t,0,0,z)dtdz,\ \ \forall f\in\mathcal{D}\left(\RR^4\right).\nonumber
	\end{eqnarray}
	Here we denoted by $J$ the third component of \eqref{current} being null the other components. Note that by linearity of Maxwell's equation the solution for $J$ can be expressed as the sum of the particular solutions for $J_0$ and $J_k$. It is well known that $a_0(r)$, with $r^2=x^2+y^2$, solves \eqref{pot_max} for $J_0$. That is, for any $f\in\mathcal{D}\left(\mathbb{R}^4\right)$,
	\begin{eqnarray}
	-\Delta a_0(f)&=&I_0\int_{\RR^4}\ln(r)\Delta f(t,q)dtdq=I_0\int_{\RR^4}\ln(r)\Delta_{x,y} f(t,q) dtdq \nonumber\\&=&2\pi I_0\int_{\RR^2}f(t,0,0,z)dtdz=2\pi J_0(f),\nonumber
	\end{eqnarray}
	where we have used that the logarithm is the fundamental solution for the Laplace's operator in $\RR^2$. On the other hand, we cannot compute directly the solution for $J_k$ as a convolution with \eqref{FundSolWave}, because the data is not of compact support, as it was said before. For that reason, we define the sequence $\{J_{k,m}\}_m$ in $\mathcal{D}'\left(\RR\right)$ as
	\begin{eqnarray}
	J_{k,m}(f)=k\int_{[-m,m]^2}f(t,0,0,z)J(t)dtdz\nonumber,
	\end{eqnarray}
	which clearly converges to $J_k$ when $m\nearrow\infty$. Fixed $k>0$, we also denote by $a_m$ the corresponding solution of the wave equation. By computing \eqref{Aret} we can deduce the expression
	\begin{eqnarray}
	a^{m}(t,r,z)=\dfrac{\mu_0}{4\pi}\int_{-m}^{m}\dfrac{I\left(t-c^{-1}\sqrt{r^2+ (z'-z)^2}\right)}{\sqrt{r^2+ (z'-z)^2}}dz'.\nonumber
	\end{eqnarray}
	Again, by the change of variable $\tau=z-z'$ and integrating by parts, it is proved that $a^m$ converges uniformly to $a$ when $m\nearrow\infty$, which implies its convergence in the distributional sense. Due to this, the function $a(t,r)$ solves the wave equation for the data $J_k$. To conclude, we have to see that the retarded potential $A(t,r)$ defined in \eqref{Aret} is a Lorenz Gauge, and then it is a solution of Maxwell's equations. But this is trivial because the two first components of $A$ and the scalar potential $\Phi$ are zero. So, we only has to compute the derivative $\partial_z A$, which is clearly null.

\end{proof}

\begin{Rem}
	In case $I(t) = \sin(t)$ the function $a(t,r)$ has an explicit representation. 
	Actually, performing the change of variable $\tau^2+r^2 = r^2\cosh^2\xi$
	\begin{align*}
	a(t,r) &= \int_0^\infty \frac{\sin[t,r,\tau]}{\sqrt{\tau^2+r^2}} d\tau = \int_0^\infty \sin\left(t-\frac{r}{c}\cosh\xi  \right) d\xi \\
	&=\frac{\pi}{2}\left[\sin t \, \frac{2}{\pi}\int_0^\infty \cos\left(\frac{ r}{c}\cosh\xi  \right)d\xi-\cos t \, \frac{2}{\pi}\int_0^\infty \sin\left(\frac{ r}{c}\cosh\xi  \right)d\xi   \right] \\
	&= \frac{\pi}{2}\left[ \sin t \, Y_0\left(\frac{ r}{c}\right)+\cos t \, J_0\left(\frac{ r}{c}\right)    \right]
	\end{align*}
	where $J_0(x),Y_0(x)$ are the Bessel functions of order zero of first and second kind respectively
\end{Rem}

\section{Hamiltonian formulation and reduction}\label{sec:ham}

From Proposition \ref{lem_1}, system \eqref{eqLintro} reduces to 
\begin{eqnarray*}
	\ddq= -\frac{\partial A}{\partial t}(t,q)+\dq \times\left( \nabla\times A(t,q)\right),    \qquad q=(x,y,z)\in\RR^3,
\end{eqnarray*}
that is Hamiltonian with
\[
H(t,q,p) = \frac{1}{2}\left|p-A(t,q) \right|^2.
\]
From $\dq = \partial H/\partial p$ we get that the momenta $p=(p_x,p_y,p_z)$ are
\[
p = \dq+A(t,q).
\]

With these assumptions, and normalizing $\mu_0 = 2\pi$ the Hamilton equations read
\[
H = \frac{1}{2}\left[ p^2_x+p_y^2+ \left( p_z+ I_0\ln r + k a(t,r) \right)^2  \right]
\]
and
\begin{equation*}
\left\{
\begin{split}
\dpx &= -(p_z+ I_0\ln r + k a(t,r) ) \left(\frac{I_0}{r^2}+k\frac{\partial_ra(t,r)}{r}\right)x        \\
\dpy &=  -(p_z+ I_0\ln r + k a(t,r) ) \left(\frac{I_0}{r^2}+k\frac{\partial_ra(t,r)}{r}\right)y    \\
\dpz &= 0     \\
\dx &= p_x      \\
\dy &= p_y     \\
\dz &= p_z+ I_0\ln r + k a(t,r)      
\end{split}
\right.
\end{equation*}
From this,
\begin{equation*}
\left\{
\begin{split}
\ddx &=- (p_z+ I_0\ln r + k a(t,r) ) \left(\frac{I_0}{r^2}+k\frac{\partial_ra(t,r)}{r}\right)x \\
\ddy &= -(p_z+ I_0\ln r + k a(t,r) ) \left(\frac{I_0}{r^2}+k\frac{\partial_ra(t,r)}{r}\right)y \\
\dz &= p_z+ I_0\ln r + k a(t,r)
\end{split}
\right.
\end{equation*}
and the momentum $p_z$ is a first integral. Passing in polar coordinates, $(x,y) = r(\cos\theta,\sin\theta)$, we get that the norm of the angular momentum is another first integral:
\[
\dy x-\dx y = r^2\dt =L.
\]
Differentiating twice the equation $r^2=x^2+y^2$ and noting that $\dx^2+\dy^2 = \dr^2+r^2\dt^2$ we get
\begin{equation}
\label{eqddr}
\ddr = \frac{L^2}{r^3}-(p_z+ I_0\ln r + k a(t,r) ) \left(\frac{I_0}{r}+k \partial_ra(t,r)\right).
\end{equation}
Hence, we will consider the equations
\begin{equation}\label{eq_cil}
\left\{
\begin{split}
\ddr & = \frac{L^2}{r^3}-(p_z+ I_0\ln r + k a(t,r) ) \left(\frac{I_0}{r}+k \partial_ra(t,r)\right), \\
r^2\dt & = L, \\
\dz &= p_z+ I_0\ln r + k a(t,r).
\end{split}
\right.
\end{equation}
Note that $\ddr = -\partial_rV(t,r)$ with
\begin{equation}\label{defV}
V(t,r) = \frac{L^2}{2r^2}+\frac{1}{2}(p_z+ I_0\ln r + k a(t,r) )^2.
\nonumber\end{equation}
If $k=0$, then also the energy
\[
E=\frac{1}{2}\dr^2 +V_0(r) = \frac{1}{2}\dr^2 + \frac{L^2}{2r^2}+\frac{1}{2}(p_z+ I_0\ln r)^2 
\]
is preserved. If $L\neq 0$ then $V_0(r)$ has only one minimum $\bar{r}$  and $V_0(r)\to\infty$ as $r\to 0,+\infty$. Hence every solution of \eqref{eqddr} is periodic. Using the solution $r(t)=\bar{r}$ in system \eqref{eq_cil} we get $\dt(t)= L/\bar{r}^{2}$ and $\dz(t) = p_z+I_0\ln\bar{r}$. Consequently, the particle moves on a cylindrical helix when $L\neq 0$ and $\dz(0)= p_z+I_0\ln\bar{r} \neq 0$.    

We want to study how this dynamics is perturbed for $k>0$.

\begin{Rem}
	In \cite{lei_zhang} it is considered the induced motion of a polarized neutral atom (i.e. a dipole) under the electromagnetic field generated by a time dependent charge density (and no current) in the wire $\mathcal{W}$. As in our case, the motion can be reduced to the radial component giving a singular equation of the form
	\[
	\ddot{r}=\frac{L^2-A+\rho(t)}{r^3}
	\]
	where $L$ is the angular momentum, $A$ a constant depending on the physical properties of the atom and $\rho(t)$ represents the charge density. Note that this equation and \eqref{eqddr} have different singularities. 
\end{Rem}

\section{Existence of solutions with periodic radial oscillations of twist type}\label{sec:periodic}
In this section we apply the local continuation Theorem to equation \eqref{eqddr}. For the sake of convenience we recall here the classical theorem taken from \cite{Cod}. To this aim, let us consider the differential equation
\begin{equation}\label{eq0}
\dx = f(t,x,\mu)
\end{equation}
where $f:V \times B(\mu_0)\rightarrow\mathbb{R}^n$
having denoted by $V$ a domain of $\mathbb{R}\times\mathbb{R}^n$ and by $B(\mu_0)$ the open ball of radius $\mu_0$ in $\mathbb{R}^m$. The function $f$ is continuous in $(t,x,\mu)$ and has first-order derivatives w.r.t. the components $x_i$ of $x$.

Suppose that $f(t+T,x,\mu)=f(t,x,\mu)$ and that for $\mu=0$, equation \eqref{eq0} admits a $T$-periodic solution $p(t)$ such that $(t,p(t))\in V$ for every $t$. We also introduce the first variation of \eqref{eq0} w.r.t. the solution $p(t)$ as
\begin{equation}\label{var_eq}
\dy = f_x (t,p(t),\mu) y
\end{equation}
where $f_x$ denotes the Jacobian matrix of $f$ w.r.t. the variable $x$. Under these assumptions, we can state 

\begin{The}[\cite{Cod}]\label{teo_cont}
	Suppose that the first variation \eqref{var_eq} for $\mu=0$ has no solution of period $T$. Then, there exists $\mu_1<\mu_0$ such that for every $|\mu|<\mu_1$, equation \eqref{eq0} has a unique $T$-periodic and continuous solution $q=q(t,\mu)$, such that $q(t,0)=p(t)$. 
\end{The}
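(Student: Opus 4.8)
The plan is to recast the search for $T$-periodic solutions as a shooting (zero-finding) problem for the time-$T$ map and then apply the implicit function theorem, the non-resonance hypothesis being exactly what makes the relevant linear map invertible. Denote by $\phi(t;\xi,\mu)$ the maximal solution of \eqref{eq0} with $\phi(0;\xi,\mu)=\xi$. Since $p(t)$ is $T$-periodic with $(t,p(t))\in V$ for all $t$, the set $\{(t,p(t)):t\in[0,T]\}$ is a compact subset of $V$; by continuous dependence on initial conditions and on parameters there are a neighbourhood $\mathcal{N}$ of $p(0)$ and $\mu_1'\in(0,\mu_0)$ such that for $(\xi,\mu)\in\mathcal{N}\times B(\mu_1')$ the solution $\phi(\cdot;\xi,\mu)$ is defined on $[0,T]$ and stays in a fixed compact subset of $V$. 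There $\phi$ is continuous in $(t,\xi,\mu)$ and, since $f$ has continuous first-order derivatives in the $x_i$, it is $C^1$ in $\xi$, with $\partial_\xi\phi(t;\xi,\mu)=Y(t;\xi,\mu)$ solving the variational equation $\dot Y=f_x(t,\phi(t;\xi,\mu),\mu)Y$, $Y(0)=\mathrm{Id}$. Set
\[
G(\xi,\mu):=\phi(T;\xi,\mu)-\xi,\qquad (\xi,\mu)\in\mathcal{N}\times B(\mu_1').
\]
Because $f$ is $T$-periodic, a solution of \eqref{eq0} is $T$-periodic if and only if its initial value satisfies $G(\xi,\mu)=0$; moreover $G(p(0),0)=0$.

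Next I would compute the partial derivative in $\xi$ at the base point:
\[
\partial_\xi G(p(0),0)=Y(T;p(0),0)-\mathrm{Id}=M-\mathrm{Id},
\]
where $M:=Y(T;p(0),0)$ is the monodromy matrix of the first variation \eqref{var_eq} at $\mu=0$. A nontrivial $T$-periodic solution $y(t)$ of \eqref{var_eq} exists precisely when $y(0)\neq 0$ and $My(0)=y(0)$, i.e.\ when $1$ is an eigenvalue of $M$. Hence the hypothesis that \eqref{var_eq} has no (nontrivial) solution of period $T$ is equivalent to $\det(M-\mathrm{Id})\neq 0$, that is, to the invertibility of $\partial_\xi G(p(0),0)$.

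With this in hand I would solve $G(\xi,\mu)=0$ near $(p(0),0)$. The only caveat is that $f$ is merely continuous in $\mu$, so $G$ is continuous in $\mu$ but only $C^1$ in $\xi$ (with invertible $\xi$-derivative at the base point); this is the setting of the continuous version of the implicit function theorem. Concretely one may rewrite $G(\xi,\mu)=0$ as the fixed point equation
\[
\xi=p(0)-(M-\mathrm{Id})^{-1}\bigl[G(\xi,\mu)-(M-\mathrm{Id})(\xi-p(0))\bigr]=:\Psi(\xi,\mu),
\]
and check, using continuity of $\partial_\xi G$ in $\xi$ together with $\partial_\xi G(p(0),0)=M-\mathrm{Id}$ and continuity of $G$ in $\mu$ with $G(p(0),0)=0$, that for $|\mu|$ small and $\xi$ in a small closed ball $\bar B$ around $p(0)$ the map $\Psi(\cdot,\mu)$ is a contraction of $\bar B$ into itself. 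The uniform contraction principle then produces $\mu_1\in(0,\mu_1')$ and a continuous map $\mu\mapsto\xi(\mu)$, $|\mu|<\mu_1$, with $\xi(0)=p(0)$, $G(\xi(\mu),\mu)=0$, and $\xi(\mu)$ the unique zero of $G(\cdot,\mu)$ in $\bar B$. Setting $q(t,\mu):=\phi(t;\xi(\mu),\mu)$ gives the required $T$-periodic solution, continuous in $(t,\mu)$ as a composition of continuous maps, with $q(t,0)=p(t)$; uniqueness near $p$ follows from local uniqueness of $\xi(\mu)$ and uniqueness for the initial value problem.

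The genuinely delicate point is the low regularity in $\mu$: the classical $C^1$ implicit function theorem does not apply verbatim, so one must invoke its continuous variant or run the contraction argument above. Everything else — differentiability of the flow in $\xi$, the identification of $\partial_\xi G(p(0),0)$ with $M-\mathrm{Id}$, and the translation of the non-resonance hypothesis into invertibility — is standard. A minor bookkeeping point is to ensure, via compactness of $[0,T]$ and continuous dependence, that all perturbed solutions remain in $V$ on $[0,T]$, so that the time-$T$ map is well defined uniformly in $(\xi,\mu)$ near $(p(0),0)$.
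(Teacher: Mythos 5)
Your argument is correct: the paper gives no proof of this statement (it is quoted verbatim from Coddington--Levinson), and your shooting-map reduction $G(\xi,\mu)=\phi(T;\xi,\mu)-\xi$, the identification of the non-resonance hypothesis with $\det(M-\mathrm{Id})\neq 0$, and the contraction/uniform-fixed-point substitute for the implicit function theorem (needed because $f$ is only continuous in $\mu$) is exactly the classical proof. The only points worth making explicit are that the hypothesis must be read as ``no \emph{nontrivial} $T$-periodic solution'' of \eqref{var_eq}, and that the uniform-in-$\mu$ contraction estimate uses joint continuity of $f_x$ in $(t,x,\mu)$, which is implicit in the Coddington--Levinson setting.
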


\begin{Rem}\label{rem_teo}
	It comes from the proof of Theorem \ref{teo_cont} that, for every $|\mu|<\mu_1$, the functions $q(t,\mu)$ are solutions of \eqref{eq0} with initial condition $x(0)=\alpha(\mu)$ where $\alpha(\cdot)$ is a continuous function defined in a neighbourhood of $0$ such that $\alpha(0)=p(0)$. Since $p(t)$ is a solution of \eqref{eq0} for $\mu=0$ and initial condition $x(0)=p(0)$, by continuous dependence w.r.t. to parameters (see \cite[Chapter V, Theorem 2.1]{Hart}) we have that actually
	\[
	\lim_{\mu\to 0}q(t,\mu) = p(t) \quad\mbox{uniformly in } t\in[0,T]. 
	\]
\end{Rem}

We want to prove the following result
\begin{Pro}\label{prop_cont}
	Suppose that $I\in\mathcal{C}^2([0,T],\mathbb{R})$ and satisfies \eqref{hp_I}. For every admissible non-resonant triplet $(\bar{r},L,p_z)$ there exists a positive $k_0$ such that, for every $|k|<k_0$, equation \eqref{eqddr} admits  a unique positive $T$-periodic $(L,p_z)$-solution $r_k(t)$ that is continuous in $(t,k)$ and such that $r_k(t)\rightarrow\bar{r}$ as $|k|\to 0$ uniformly in $t\in[0,T]$.
\end{Pro}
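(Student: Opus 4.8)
\emph{Proof plan.}

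The plan is to apply the local continuation Theorem \ref{teo_cont} to equation \eqref{eqddr}, taking $k$ as the perturbation parameter $\mu$ and the constant function $\bar r$ as the unperturbed $T$-periodic solution. First I would recast \eqref{eqddr} as the planar system $\dot x=f(t,x,k)$ with $x=(r,v)$ (so that $v=\dr$ along solutions) and
\[
f(t,(r,v),k)=\left(v,\ \frac{L^2}{r^3}-\Bigl(p_z+I_0\ln r+k\,a(t,r)\Bigr)\Bigl(\frac{I_0}{r}+k\,\partial_r a(t,r)\Bigr)\right),
\]
working on a domain of the form $V=\RR\times(\bar r/2,2\bar r)\times\RR$. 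By Proposition \ref{lem_1}, the hypothesis $I\in\mathcal{C}^2$ guarantees that $a(t,r)$ is $T$-periodic in $t$ and of class $\mathcal{C}^2$ on $\RR\times\RR^+$; hence $f$ is $T$-periodic in $t$, jointly continuous in $(t,x,k)$ and $\mathcal{C}^1$ in $x$ on $V$. It is precisely the appearance of \emph{both} $a$ and $\partial_r a$ in $f$ that forces $I$ to be $\mathcal{C}^2$ rather than merely $\mathcal{C}^1$. For $k=0$ the system reduces to $\ddr=-V_0'(r)$ with $V_0(r)=\frac{L^2}{2r^2}+\frac12(p_z+I_0\ln r)^2$, and the admissibility relation $L^2=\bar r^2 I_0(p_z+I_0\ln\bar r)$ is exactly the statement $V_0'(\bar r)=0$; thus $p(t)\equiv(\bar r,0)$ is a $T$-periodic solution with $(t,p(t))\in V$ for all $t$.

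Next I would analyse the first variation of $\dot x=f(t,x,0)$ along $p(t)$, which is the scalar equation $\ddot y=-V_0''(\bar r)\,y$. Differentiating $V_0$ twice and using the admissibility relation to eliminate $p_z+I_0\ln\bar r$ gives
\[
V_0''(\bar r)=\frac{2L^2}{\bar r^4}+\frac{I_0^2}{\bar r^2}=\frac{2L^2+I_0^2\bar r^2}{\bar r^4}>0,
\]
since $I_0\neq0$. So the first variation is a harmonic oscillator, and it admits a nontrivial $T$-periodic solution precisely when $T$ is an integer multiple of its minimal period $2\pi/\sqrt{V_0''(\bar r)}$; the non-resonance condition imposed on the admissible triplet $(\bar r,L,p_z)$ is exactly what excludes this. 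Hence all hypotheses of Theorem \ref{teo_cont} hold (with $\mu=k$), and we obtain $k_0>0$ such that for every $|k|<k_0$ equation \eqref{eqddr} has a unique $T$-periodic solution $r_k(t)$, continuous in $(t,k)$, with $r_0(t)\equiv\bar r$. By Remark \ref{rem_teo}, $r_k(t)\to\bar r$ uniformly on $[0,T]$, and hence for all $t$ by periodicity; shrinking $k_0$ if necessary this forces $r_k(t)\in(\bar r/2,2\bar r)$, so $r_k$ is positive and stays inside the domain on which $f$ is defined. Finally, recovering $\theta(t)$ and $z(t)$ from the remaining two equations of \eqref{eq_cil} exhibits $r_k$ as the radial component of an $(L,p_z)$-solution, which completes the plan.

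I do not expect a genuine obstacle here: once Proposition \ref{lem_1} has established the good behaviour of the retarded-potential term $a(t,r)$, the argument is a textbook continuation of equilibria. The two points that demand a little care are (i) verifying the $\mathcal{C}^1$ dependence of $f$ on the spatial variable, which is exactly what pins down the $\mathcal{C}^2$ regularity hypothesis on $I$, and (ii) carrying out the computation of $V_0''(\bar r)$ and checking that its positivity together with the non-resonance hypothesis really do rule out a nontrivial $T$-periodic solution of the linearised equation, so that Theorem \ref{teo_cont} applies.
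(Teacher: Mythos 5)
Your proposal is correct and takes essentially the same route as the paper: recast \eqref{eqddr} as a first-order system with $k$ as the parameter, use admissibility to see that $r\equiv\bar r$ is the unperturbed $T$-periodic solution, note that the first variation is the harmonic oscillator $\ddot y=-V_0''(\bar r)y$ with $V_0''(\bar r)=\frac{2L^2}{\bar r^4}+\frac{I_0^2}{\bar r^2}$, invoke non-resonance to exclude nontrivial $T$-periodic solutions of the linearization, and conclude with Theorem \ref{teo_cont} plus Remark \ref{rem_teo}, shrinking $k_0$ for positivity. One small caveat: your resonance criterion (nontrivial $T$-periodic solutions iff $T\in\{2\pi n/\sqrt{V_0''(\bar r)}\,:\,n\in\mathbb{N}\}$) is the mathematically correct one, whereas the paper's definition of a non-resonant triplet excludes the set $\{n\sqrt{V_0''(\bar r)}\}$ and its proof calls $\sqrt{V_0''(\bar r)}$ itself the period $T_0$ (a factor-of-$2\pi$ slip), so your claim that the stated non-resonance condition is ``exactly'' what rules out resonance does not match the definition as literally written; this discrepancy originates in the paper, not in your argument.
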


\begin{proof}
	
	Let us fix an admissible non-resonant triplet $(\bar{r},L,p_z)$ and consider only the case $I_0>0$, being the other case similar. In order to apply Theorem \ref{teo_cont} to equation \eqref{eqddr} we note that it can be written in the form \eqref{eq0} as a system of first order and the regularity assumptions are satisfied in a neighbourhood of $\bar{r}$ by Proposition \ref{lem_1}. For $k=0$ equation \eqref{eqddr} reduces to    
	\begin{equation*}
	\ddot r = \frac{L^2}{r^3}-\frac{I_0p_z}{r}-\frac{I_0^2\ln(r)}{r}
	\end{equation*}
	that admits the constant solution $r(t)=\bar{r}$ since $\bar{r},L,p_z$ belong to an admissible triplet.
	
	Writing $p_z=p_z(\bar{r},L)$, the first variation w.r.t. $r(t)=\bar{r}$ for $k=0$ of \eqref{eqddr} can be written as
	\begin{eqnarray*}
		\dy
		&=&
		\begin{pmatrix}
			0  &   1  \\
			-\frac{1}{\bar{r}^{2}}\left(\frac{2L^2}{\bar{r}^{2}} + I_0^2\right) &  0 
		\end{pmatrix}y, 
	\end{eqnarray*}
	whose solutions are $T_0$-periodic with
	\[
	T_0=\frac{1}{\bar{r}}\sqrt{\frac{2L^2}{\bar{r}^2}+I_0^2}.
	\]
	Since by hypothesis $T\notin\{nT_0, \: n\in\mathbb{N}\}$, Theorem \ref{teo_cont} and Remark \ref{rem_teo}, give the existence of $k_0>0$ such that for every $|k|<k_0$ there exists a unique $T$-periodic solution $r_k(t)$ of \eqref{eqddr}, and $r_k(t)\rightarrow \bar{r}$ as $k\to 0$ uniformly in $t$. Finally, by the uniform convergence to $\bar{r}>0$ one can eventually decrease the value of $k_0$ in order to guarantee that, for every fixed $|k|<k_0$, $r_k(t)>0$ for every $t\in [0,T]$.
\end{proof}

Now, it is straightforward the

\begin{proof}[Proof of Theorem \ref{mainA}]
	Fixing an admissible non-resonant triplet $(\bar{r},L,p_z)$, Proposition \ref{prop_cont} guarantees the existence of a positive solution $r_k(t)$ of \eqref{eqddr} continuous in $(t,k)$ such that $r_k(t+T)=r_k(t)$, $r_0(t)=\bar{r}$. Inserting this solution in \eqref{eq_cil} we get the thesis. Actually, the angular momentum $L$ is not zero. Concerning the sign of $\dz$ we have
	\[
	\dz = I_0 +I_0[\ln(r_k(t))-\ln(\bar{r})]+ka(t,r_k(t)). 
	\]
	Hence, since $r_k(t)\to \bar{r}$ uniformly in $t$ as $k\to 0$ and $a(t,r)$ is continuous in a neighbourhood of $\bar{r}$, we get the result for some eventually smaller $k_0$ recalling that $I_0\neq 0$.
\end{proof}

\subsection{Proof of Theorem \ref{mainB}}\label{sec:twist}
We shall study when the $T$-periodic solution $r_k(t)$ of \eqref{eqddr} coming from Proposition \ref{prop_cont} is of twist type, which implies our result.

The definition of periodic solution of twist type for a differential equation like \eqref{eqddr} is given in \cite{Ortega}. We recall here that a $T$-periodic solution of twist type corresponds to a fixed point of the associated Poincar\'e map, that is accumulated by invariant curves. By the general theory of twist maps and Ordinary Differential Equations (see for example \cite{Mather,Meyer,Moser}) these solutions are Lyapunov stable and accumulated by $(q,p)$-subharmonic and (generalized) quasiperiodic solutions with frequencies $(1,\omega)$ for $p/q$ and $\omega$ small. The definitions of stability, $(q,p)$-subharmonic and (generalized) quasiperiodic solutions of \eqref{eqddr} is a straightforward adjustment of Definition \ref{def_sol}.  

It comes from \cite{Ortega} that the twist character of a periodic solution can be deduced by the third approximation of the equation. To this aim, let us move the solution to the origin via the change of variable $x=r-r_k(t)$ in \eqref{eqddr} and compute the development up to third order around $x=0$. This expression has the form
\[
\ddy+ A_k(t)y +B_k(t)y^2 +C_k(t)y^3 =0.
\]
The result in \cite{Ortega} does not fit in our problem, hence we recall the following version:

\begin{The}[\cite{torres_zhang}]\label{torres_zhang}
	The $T$-periodic solution $r_k(t)$ is of twist type if
	\begin{itemize}
		\item[(i)] $0<(A_k)_*\leq (A_k)^*<\left(\dfrac{\pi}{2T}\right)^2$,
		\item[(ii)] $(C_k)_*>0$,
		\item[(iii)] $10(B_k)_*^2(A_k)_*^{3/2}>9(C_k)^*[(A_k)^*]^{5/2}$,
	\end{itemize}
	where $f^*$ and $f_*$ respectively represent the supremum and infimum of $f\in\mathcal{C}\left([0,T];\mathbb{R}\right)$.
\end{The}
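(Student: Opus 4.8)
The statement is a twist criterion of Ortega's ``third approximation'' type, so the plan is to adapt that scheme. After moving the periodic solution to the origin one is looking at the equation
\[
\ddot y + A_k(t)y + B_k(t)y^2 + C_k(t)y^3 + O(y^4)=0 ,
\]
and one studies its time-$T$ Poincar\'e map $P$, an area-preserving diffeomorphism of a neighbourhood of $0$ in $\mathbb{R}^2$ fixing the origin. The goal is to show that $0$ is a fixed point of \emph{twist type} for $P$, i.e.\ elliptic and accumulated by invariant closed curves; once this is established, Moser's invariant curve theorem together with the Poincar\'e--Birkhoff theorem and the standard theory of area-preserving twist maps (see \cite{Mather,Meyer,Moser}) yield Lyapunov stability of the periodic solution and the families of subharmonic and (generalized) quasiperiodic solutions near it, exactly as in \cite{Ortega}.

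First I would analyse the linear equation $\ddot y + A_k(t)y=0$, whose monodromy matrix is precisely $DP(0)$. By (i) and a Sturm-type comparison with the constant-coefficient equations $\ddot y+(A_k)_* y=0$ and $\ddot y+(A_k)^* y=0$, this matrix is elliptic with eigenvalues $e^{\pm i\theta_0}$ and rotation angle $\theta_0\in[\,\sqrt{(A_k)_*}\,T,\ \sqrt{(A_k)^*}\,T\,]\subset(0,\pi/2)$. In particular $e^{ik\theta_0}\neq 1$ for $k=1,2,3,4$, which is exactly the non-resonance up to order four needed to bring $P$ to a fourth-order Birkhoff normal form. I would then fix a complex Floquet solution $\phi$ of $\ddot\phi+A_k\phi=0$ normalised by $\phi\,\dot{\bar\phi}-\bar\phi\,\dot\phi=-2i$ and $\phi(t+T)=e^{i\theta_0}\phi(t)$, and set $a_k(t):=|\phi(t)|^2>0$, a positive $T$-periodic ``amplitude'' function whose size is controlled from above and below by $(A_k)^*$ and $(A_k)_*$ through standard differential inequalities.

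Next, a linear symplectic (Floquet) change of variables built from $\phi$ conjugates the linear part of $P$ to the rotation $R_{\theta_0}$; in a complex coordinate $z$ the map becomes $P(z)=e^{i\theta_0}z+(\text{quadratic})+(\text{cubic})+o(|z|^3)$. Solving the homological equations---possible precisely because of the order-four non-resonance---removes the non-resonant quadratic and cubic terms and produces the normal form
\[
z\ \longmapsto\ e^{\,i(\theta_0+\gamma_k|z|^2)}\,z + o(|z|^3),
\]
where the twist coefficient $\gamma_k$ is an explicit functional of $A_k,B_k,C_k$ of the shape
\[
\gamma_k \;=\; -\,\frac{3}{8}\int_0^T C_k(t)\,a_k(t)^2\,dt \;+\; \mathcal{Q}_k[B_k]\;+\;\cdots,
\]
the second term $\mathcal{Q}_k[B_k]$ being a bilinear functional of $B_k$ produced at second order in perturbation theory (the quadratic term fed back through the solution of its own homological equation). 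If $\gamma_k\neq0$, Moser's twist theorem applies and $0$ is of twist type.

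The heart of the matter is to show that (ii) and (iii) force $\gamma_k\neq0$. The structural fact is that $\mathcal{Q}_k[B_k]$ carries a \emph{definite sign}---a quadratic nonlinearity always bends the amplitude--frequency curve in one fixed direction (the Duffing/KdV mechanism)---while, by (ii), the term $-\tfrac{3}{8}\int_0^T C_k a_k^2$ has a fixed sign as well, opposite to that of $\mathcal{Q}_k[B_k]$; hence $\gamma_k=0$ would require an exact balance between the two. Estimating $\int_0^T C_k a_k^2$ from above via $C_k\le(C_k)^*$ together with the sharp upper bound for $a_k$ in terms of $(A_k)^*$, and estimating $|\mathcal{Q}_k[B_k]|$ from below via $(B_k)_*^2$ together with the sharp bound for the relevant power of $a_k$ in terms of $(A_k)_*$, one finds that condition (iii), namely $10(B_k)_*^2(A_k)_*^{3/2}>9(C_k)^*[(A_k)^*]^{5/2}$, is exactly the inequality making the lower bound for $|\mathcal{Q}_k[B_k]|$ strictly exceed the upper bound for the $C_k$-term, so that no cancellation can occur and $\gamma_k\neq0$. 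The calibration is checked on the constant-coefficient case $A_k\equiv A$, where $\gamma_k\neq0$ iff $10B^2\neq 9CA$, which is precisely (iii) after dividing by $A^{3/2}$. I expect this last step---extracting the constants $9,10$ and the exponents $3/2,5/2$ from the sharp comparison estimates for the amplitude function $a_k$, and the careful bookkeeping of the second-order contribution $\mathcal{Q}_k[B_k]$---to be the main obstacle; the normal-form reduction itself is lengthy but routine.
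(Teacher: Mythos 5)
A preliminary remark: the paper does not prove this statement at all --- it is imported verbatim from \cite{torres_zhang} as a black box (and that reference in turn builds on Ortega's third--approximation method \cite{Ortega}), so there is no in-paper proof to compare against. What can be judged is whether your sketch would reconstruct the cited proof. Your road map is the right one: Sturm comparison turns (i) into ellipticity of the monodromy with rotation angle $\theta_0\in[\sqrt{(A_k)_*}\,T,\sqrt{(A_k)^*}\,T]\subset(0,\pi/2)$, hence non-resonance up to order four; a Birkhoff normal form then produces a twist coefficient that is a functional of $A_k,B_k,C_k$; and your calibration against the autonomous frequency shift $(9CA-10B^2)/(24A^{3/2})$ is exactly where the constants $9$ and $10$ originate. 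Moser's theorem then gives the twist conclusion.

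That said, the decisive steps are asserted rather than carried out, and one assertion is misleading as stated. First, you never write down $\mathcal{Q}_k[B_k]$: in Ortega's computation it is a double integral of the form $\int_0^T\!\!\int_0^T B_k(t)B_k(s)\,r(t)^2r(s)^2\,\chi_{\theta_0}(\cdot)\,dt\,ds$ against an explicit kernel, and the \emph{definite sign of this term is not a universal ``Duffing/KdV'' fact} for time-periodic coefficients: the kernel has denominators of the type $\sin\theta_0$ and $\sin(3\theta_0/2)$ and loses positivity once the rotation angle passes $\pi/2$. The positivity of the kernel is precisely where the upper bound $(A_k)^*<(\pi/2T)^2$ in (i) is used beyond mere ellipticity, whereas your sketch attributes the sign to a general mechanism. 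Second, the exponents $3/2$ and $5/2$ in (iii) come from sharp two-sided bounds on the normalized Floquet amplitude $r(t)=|\phi(t)|$ (with $r^2\dot\varphi\equiv 1$) in terms of $(A_k)_*$ and $(A_k)^*$; these pointwise estimates are the technical heart of \cite{torres_zhang} and you only promise them (``I expect this last step to be the main obstacle''). As it stands, inequality (iii) is reverse-engineered from the constant-coefficient case rather than derived, so the proposal is a correct strategy with the load-bearing estimates --- the kernel positivity and the amplitude bounds --- still missing.
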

Let us start proving the following:
\begin{Pro}\label{prop_twist}
	Suppose that $I(t)\in\mathcal{C}^4\left([0,T];\mathbb{R}\right)$ and satisfies \eqref{hp_I}. For every admissible strongly non-resonant triplet $(\bar{r},L,p_z)$ there exists $k_1$, such that the solution $r_k(t)$ of \eqref{eqddr} coming from Proposition \ref{prop_cont} is of twist type for $k<k_1$. 
\end{Pro}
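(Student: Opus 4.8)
The plan is to verify the three conditions of Theorem \ref{torres_zhang} for the coefficients $A_k,B_k,C_k$ obtained by Taylor-expanding \eqref{eqddr} around the periodic solution $r_k(t)$, exploiting that everything is a small perturbation (in $k$, and via uniform convergence $r_k\to\bar r$) of the constant solution $r(t)=\bar r$ of the unperturbed equation. First I would write $f(t,r)=\frac{L^2}{r^3}-(p_z+I_0\ln r+ka(t,r))\left(\frac{I_0}{r}+k\partial_r a(t,r)\right)$ so that \eqref{eqddr} is $\ddot r=f(t,r)$, and set $x=r-r_k(t)$ to get $\ddot x = f(t,r_k(t)+x)-f(t,r_k(t))$, whence
\[
A_k(t)=-\partial_r f(t,r_k(t)),\quad B_k(t)=-\tfrac12\partial_r^2 f(t,r_k(t)),\quad C_k(t)=-\tfrac16\partial_r^3 f(t,r_k(t)).
\]
Since $I\in\mathcal C^4$, Proposition \ref{lem_1} gives $a(t,r)\in\mathcal C^4$ in $r$ on a neighbourhood of $\bar r$, so these coefficients are continuous and $T$-periodic in $t$, and — crucially — by Proposition \ref{prop_cont} ($r_k\to\bar r$ uniformly) together with $k\to 0$, they converge uniformly on $[0,T]$ as $k\to 0$ to the constants obtained by differentiating the unperturbed $f_0(r)=\frac{L^2}{r^3}-\frac{I_0 p_z}{r}-\frac{I_0^2\ln r}{r}$ at $r=\bar r$. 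A direct computation (using the admissibility relation $L^2=\bar r^2 I_0(p_z+I_0\ln\bar r)$, i.e. $p_z+I_0\ln\bar r=L^2/(\bar r^2 I_0)$) gives $A_0:=-f_0'(\bar r)=\frac{1}{\bar r^2}\left(\frac{2L^2}{\bar r^2}+I_0^2\right)$, which matches the frequency $T_0^{-2}$ from the proof of Proposition \ref{prop_cont}; and one computes explicitly the constants $B_0=-\tfrac12 f_0''(\bar r)$ and $C_0=-\tfrac16 f_0'''(\bar r)$.

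Next I would check the three inequalities at $k=0$ and then argue they persist for small $k$ by the uniform convergence just established (strict inequalities between continuous functions on a compact interval are stable under uniform perturbation). Condition (i): $(A_k)_*,(A_k)^*\to A_0=\frac{2L^2+I_0^2\bar r^2}{\bar r^4}$, and the strong non-resonance hypothesis \eqref{stronglynonresonant} is exactly $\sqrt{A_0}<\frac{\pi}{2T}$, i.e. $A_0<\left(\frac{\pi}{2T}\right)^2$; positivity of $A_0$ is clear. Condition (ii): I need $C_0>0$, so the bulk of the work is the explicit evaluation of $f_0'''(\bar r)$ and checking its sign is negative; I expect this to come out positive after simplification with the admissibility relation (and this is presumably where one uses $L\neq 0$ and the precise structure of $f_0$). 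Condition (iii): at $k=0$ it reads $10 B_0^2 A_0^{3/2}>9 C_0 A_0^{5/2}$, i.e. $10 B_0^2>9 C_0 A_0$; since all quantities are now explicit rational functions of $\bar r, L, I_0$ subject to one polynomial constraint, this reduces to a single algebraic inequality that I would verify — possibly it holds unconditionally given admissibility, or possibly it needs the strong non-resonance bound to kill a remaining term, in which case I would feed \eqref{stronglynonresonant} back in.

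The main obstacle is the bookkeeping and sign analysis of condition (ii)–(iii): one must compute $f_0',f_0'',f_0'''$ at $\bar r$, substitute the admissibility relation to eliminate $p_z$, and then show the resulting explicit expressions satisfy $C_0>0$ and $10 B_0^2>9C_0A_0$. These are finite but delicate algebraic manipulations, and there is a real risk that the naive bound fails and one has to invoke the quantitative smallness in \eqref{stronglynonresonant} (which controls $A_0$, hence the right-hand sides) to close the argument — so I would keep that hypothesis in reserve precisely for (iii). Once the $k=0$ inequalities are strict, choosing $k_1>0$ small enough that $(A_k)^*,(B_k)_*,(C_k)_*,(C_k)^*$ stay within the required neighbourhoods of $A_0,B_0,C_0$ finishes the proof; the deduction of radial stability, subharmonics, and quasiperiodic solutions in Theorem \ref{mainB} then follows from the standard twist-map theory cited after Theorem \ref{torres_zhang}, with the final passage to the full phase space handled by the argument adapted from \cite{SM} as announced in the introduction.
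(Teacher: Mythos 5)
Your proposal is essentially the paper's own proof: expand \eqref{eqddr} to third order around $r_k(t)$, note that by Propositions \ref{lem_1} and \ref{prop_cont} the coefficients converge uniformly as $k\to0$ to the constants obtained at $r=\bar r$ for $k=0$ (the paper's $\bar A=\frac{2L^2}{\bar r^4}+\frac{I_0^2}{\bar r^2}$, $\bar B=-\frac{5L^2}{\bar r^5}-\frac{3I_0^2}{2\bar r^3}$, $\bar C=\frac{9L^2}{\bar r^6}+\frac{11I_0^2}{6\bar r^4}$ after using admissibility), check the three conditions of Theorem \ref{torres_zhang} at $k=0$, and conclude by stability of strict inequalities. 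The one point you left in reserve resolves favorably: a direct computation gives $10\bar B^2-9\bar C\bar A=\frac{1}{\bar r^6}\left(\frac{88L^4}{\bar r^4}+\frac{36L^2I_0^2}{\bar r^2}+6I_0^4\right)>0$ unconditionally, so the strong non-resonance hypothesis is needed only for condition (i).
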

\begin{proof}
	As before, let us consider only the case $I_0>0$. Fixing the strongly non resonant triplet $(\bar{r},L,p_z)$ we can apply Theorem \ref{mainA} and consider, for $k<k_0$, the $T$-periodic solutions $r_k(t)$ of \eqref{eqddr}. We recall that   
	$r_k(t) = \bar{r}+\xi_r(t,k)$ where $\xi_r(t,k)$ is a continuous function, $T$-periodic in $t$ and such that $\xi_r(t,k)\rightarrow 0$ as $k\to 0$ uniformly in $t$.
	
	To compute the coefficients $A_k(t),B_k(t),C_k(t)$ of Theorem \ref{torres_zhang}, it is convenient to rewrite equation \eqref{eqddr} as:
	\begin{eqnarray}
	\ddot{r}-\dfrac{L}{r^3}+\dfrac{I_0}{r}(p_z+I_0\ln r)-k\partial_r\left(a(t,r)g(t,r,k)\right)=0,\nonumber
	\end{eqnarray}
	where $g(t,r,k):=p_z+I_0\ln r+ ka(t,r)$. Using this, the calculus is simplified and we obtain the expressions:
	\begin{align*}
	A_k(t) &= \frac{3L^2}{r_k^4(t)}-\frac{I_0p_z}{r_k^2(t)}+\frac{I_0^2(1-\ln r_k(t))}{r_k^2(t)} +k \partial_{rr}\left(a(t,r_k(t))g(t,r_k(t),k)\right); \\
	B_k(t) &= -\frac{6L^2}{r_k^5(t)}+\frac{I_0p_z}{r_k^3(t)}+\frac{I_0^2(2\ln r_k(t)-3)}{2r_k^3(t)} +k  \partial_{rrr}\left(a(t,r_k(t))g(t,r_k(t),k)\right);\\
	C_k(t) &= \frac{10L^2}{r_k^6(t)}-\frac{I_0p_z}{r_k^4(t)}+\frac{I_0^2(11-6\ln r_k(t))}{6r_k^4(t)} +k \partial_{rrrr}\left(a(t,r_k(t))g(t,r_k(t),k)\right).
	\end{align*}
	
	Since the triplet is admissible $I_0p_z=\frac{L^2}{\bar{r}^2}-I_0^2\ln(\bar{r})$, so that we can write the coefficients as follows:
	\begin{equation*}
	\begin{split}
	A_k(t) &= \bar{A} + \xi_A (t,r_k(t),k), \qquad  \mbox{with } \bar{A}=\dfrac{2L^2}{\bar{r}^4}+\frac{I_0^2}{\bar{r}^2},
	\\ 
	B_k(t) &= \bar{B}+\xi_B (t,r_k(t),k), \qquad  \mbox{with }\bar{B}=-\dfrac{5L^2}{\bar{r}^5} -\dfrac{3I_0^2}{2\bar{r}^3},
	\\
	C_k(t) &=\bar{C} +\xi_C (t,r_k(t),k), \qquad \mbox{with }\bar{C}=9\dfrac{L^2}{\bar{r}^6}+\dfrac{11}{6}\dfrac{I_0^2}{\bar{r}^4}.
	\end{split}
	\end{equation*}
	Note that, as $r_k(t)$ converges to $\bar{r}$ uniformly in time, then the residual functions $\xi_A, \xi_B, \xi_C$ converge to $0$ uniformly in $t$ as $k\rightarrow 0$. Moreover, $\bar{C}$ is trivially positive, the strong non-resonant condition implies that $\bar{A}<\frac{\pi^2}{4T^2}$ and a direct computation shows that $10\bar{B}^2\bar{A}^{3/2}>9\bar{C}\bar{A}^{5/2}$ (note that it is enough to check that $10\bar{B}^2>9\bar{C}\bar{A}$).
	
	Therefore, using the fact that the remainders $\xi_A, \xi_B, \xi_C$ converge to $0$ uniformly in $t$ as $k\rightarrow 0$, we have that there exists $k_1>0$ such that the coefficients $A_k(t),B_k(t),C_k(t)$ satisfy the hypothesis of Theorem \ref{torres_zhang} for $k<k_1$. 
	
\end{proof}

We are now ready for the
\begin{proof}[Proof of Theorem \ref{mainB}]
	Let us fix an admissible strongly non-resonant triplet
	$(\bar{r},L,p_z)$. From Proposition \ref{prop_twist} we consider
	for every $k<k_0$ the $T$-periodic solution of twist type of
	\eqref{eqddr}. Lyapunov stability is a known property of solutions
	of twist type. Moreover there exist, for $\omega$ and $p/q$ small,
	(generalized) quasi periodic solutions with frequencies $(1,\omega)$
	and solutions $r(t)$ with minimal period $qT$. These periodic
	solutions are such that the function $r(t)-r_k(t)$ has $2p$ zeros in
	a period $[0,qT]$.
	
	Note that, since $r_k(t)\to\bar{r}$ uniformly for $k\to 0$, we also have that $r(t)-\bar{r}$ has $2p$ zeros in a period $[0,qT]$.
	
	Inserting these solutions into system \eqref{eq_cil} we get the thesis, for fixed values of $L$ and $p_z$. In particular we get stability restricted to the integral levels.
	It remains to prove the stability in the whole phase space. We will proceed adapting an argument in \cite{SM}. Consider the map $P(r_0,\dot{r}_0,L,p_z)=(r(T),\dot{r}(T),L,p_z)$ that maps one of these solutions with initial condition $(r_0,\dot{r}_0)$ and integrals $L,p_z$ to the corresponding values at time $T$. To prove stability it is enough to prove that $q_0=(r_0,\dot{r}_0,L_0,p_{z_0})$ is a stable fixed point for the map $P$. Let us fix a neighbourhood $\mathcal{U}$ of $q_0$. Since the point $(r_0,\dot{r}_0)$ is the initial condition of a periodic solution of \eqref{eqddr} of twist type for fixed values of $L,p_z$, we can find in $\mathcal{U}$ an invariant planar region bounded by a closed curve surrounding $(r_0,\dot{r}_0)$ of the form
	\[
	\mathcal{U}_1=\{(r-r_0)^2+(\dot{r}-\dot{r}_0)^2\leq R(\Theta), \: L=L_0, \: p_z=p_{z_0} \}\subset\mathcal{U},
	\]
	where $\Theta=\Theta(r,\dot{r})$ represents the angle centred in $(r_0,\dot{r}_0)$.
	Equation \eqref{eqddr} depends continuously on the parameters $L,p_z$, hence by continuous dependence, we can find a family of curves
	\[
	(r-r_0)^2+(\dot{r}-\dot{r}_0)^2\leq R(\Theta,L,p_z)
	\]
	depending continuously on $L,p_z$ and with the properties above. Note that here we have used the fact that the strongly non resonant condition \eqref{stronglynonresonant} is an open condition (for fixed $\bar{r}$). Therefore, for sufficiently small $\delta$, the region
	\[
	\mathcal{U}_2=\{(r-r_0)^2+(\dot{r}-\dot{r}_0)^2\leq R(\Theta,L,p_z), \: |L-L_0|<\delta, \: |p_z-p_{z_0}|<\delta \}
	\]
	is invariant and contained in $\mathcal{U}$. Therefore, the point $q_0$ is stable under the map $P$ and the solutions are radially stable.
\end{proof}

\section{Conclusions}
In this paper we described the
non-relativistic dynamics induced by a periodically time dependent current along an infinite long straight wire, which is ruled by the Newton-Lorentz equation \eqref{eqLintro}. The constant case represents a classical magnetostatic situation where every charged particle with non-null angular momentum is radially periodic and definitely increasing in the current direction. Moreover, the motions are radially stable and then each particle is confined between two cylinders. The introduction of a time dependent perturbation in the current generates an electric field which breaks the magnetostatic structure of the dynamics. According to this, Biot-Savart law does not hold and it is necessary to deduce the electromagnetic field by solving Maxwell equations in a distributional sense. However, this cannot be done directly because the current distribution is not of compact support so that we considered a sequence of approximated problems to deduce it properly.

We proved that non resonant radially periodical motions are preserved when the time dependent perturbation is small enough. This solution depends on the fixed values of angular and linear momenta, which are first integrals of motions. To this aim we use the local continuation Theorem to the equilibrium of the constant current dynamical system, where the resonances must be excluded. Furthermore, a stability analysis is performed for the perturbed dynamics. In particular, we applied the third approximation method to prove that the radial component of the just found radially periodic solutions are of twist type, implying radially Lyapunov stability for fixed values of the momenta. Finally, we extended Lyapunov stability to the phase plane adapting an argument of \cite{SM}, i.e. stability still holds for small variations of the momenta. 

In conclusion, we gave an analytical study of the dynamics and some stability properties induced by a time dependent current. This represents a counterpart of the well known results for constant currents. We believe that our techniques could be adapted to the study of different configurations of the wire, such as the case of a circular wire.

\section*{Acknowledgements}
The authors are grateful to Pedro Torres and Rafael Ortega for fruitful discussions and suggestions.

\end{document}